\newtheorem{theorem}{Theorem}
\newtheorem{lemma}{Lemma}
\newtheorem{definition}{Definition}
\newtheorem{proposition}{Proposition}
\newtheorem{remark}{Remark}
\newcommand{\bbint}[2]{\ensuremath{\backslash\!\!\!\!\backslash\!\!\!\!\!\int_{#1}^{#2}}}
\begin{document}

\title[Analytic Principal Value]{The Cauchy Principal Value and the Hadamard finite part integral as values of absolutely convergent integrals}

\author{Eric A. Galapon}
\address{Theoretical Physics Group, National Institute of Physics, University of the Philippines, Diliman Quezon City, 1101 Philippines}
\email{eric.galapon@up.edu.ph}
\date{\today}
\maketitle

\begin{abstract}
	The divergent integral $\int_a^b f(x)(x-x_0)^{-n-1}\mathrm{d}x$, for $-\infty<a<x_0<b<\infty$ and $n=0, 1, 2, \dots$, is assigned, under certain conditions, the value equal to the simple average of the contour integrals $\int_{C^{\pm}} f(z)(z-x_0)^{-n-1}\mathrm{d}z$, where $C^+$ ($C^-$) is a path that starts from $a$ and ends at $b$, and which passes above (below) the pole at $x_0$. It is shown that this value, which we refer to as the Analytic Principal Value, is equal to the Cauchy principal value for $n=0$ and to the Hadamard finite-part of the divergent integral for positive integer $n$. This implies that, where the conditions apply, the Cauchy principal value and the Hadamard finite-part integral are in fact values of absolutely convergent integrals. Moreover, it leads to the replacement of the boundary values in the Sokhotski-Plemelj-Fox Theorem with integrals along some arbitrary paths. The utility of the Analytic Principal Value in the numerical, analytical and asymptotic evaluation of the principal value and the finite-part integral is discussed and demonstrated.
\end{abstract}

\section{Introduction}
Divergent integrals arise naturally in many areas of physics and engineering \cite{shiekh,lee,ioa2,lifanov}. In this paper we consider the class of non-converging integrals given by
\begin{equation}\label{divergent}
\int_a^b \frac{f(x)}{(x-x_0)^{n+1}} \mbox{d}x,\;\;\; -\infty < a<x_0<b<\infty,\;\; n=0,1,2,\dots 
\end{equation}
for some function $f(x)$ not vanishing at $x=x_0$. These have been assigned meaningful values by a symmetric removal of the singular point, $x=x_0$, of the integrand. In particular the integral, for a fixed $n$, is replaced with the limit
\begin{equation}\label{limit}
 \lim_{\epsilon\rightarrow 0^+}\left[\int_a^{x_0-\epsilon} \frac{f(x)}{(x-x_0)^{n+1}} \mathrm{d}x+ \int_{x_0+\epsilon}^{b} \frac{f(x)}{(x-x_0)^{n+1}} \mathrm{d}x\right],
\end{equation}
and a finite value is extracted which is assigned as the value of the divergent integral. Under some continuity conditions on $f(x)$, only the case $n=0$ leads to a well-defined limit, which is the well-known Cauchy Principal Value (CPV) \cite{pipkin}.  For positive integer $n=1,2, \dots$ the limit does not exist. However, expression \ref{limit} can be cast into a form with a group of terms possessing a finite value in the limit $\epsilon\rightarrow 0$ and into another group of terms that diverge in the same limit. The divergent integral is assigned a value by hand by dropping the diverging term, leaving the group of terms with a finite value in the limit, the limit of which is assigned as the value of the divergent integral \cite{ang,kanwal,cohen,chan,monegato}. This manner of assigning value to a divergent integral is due to Hadamard \cite{hadamard}, and the value is now known as the Hadamard finite part or the Finite-Part Integral (FPI). 

Historically it was Fox \cite{fox} who made the first investigation of the integral \ref{divergent} following earlier Hadamard's introduction of the finite-part of a divergent integral \cite{hadamard}. He obtained the explicit form of the Finite-Part Integral of equation \ref{divergent}, with the Cauchy Principal Value as a special case. It is given by 
\begin{eqnarray}\label{fox}
&&\#\!\!\int_a^b \frac{f(x)}{(x-x_0)^{n+1}} \mbox{d}x \nonumber \\
&&\hspace{12mm}=\lim_{\epsilon\rightarrow 0^+}\left[\int_a^{x_0-\epsilon} \frac{f(x)}{(x-x_0)^{n+1}} \mathrm{d}x+ \int_{x_0+\epsilon}^{b} \frac{f(x)}{(x-x_0)^{n+1}} \mathrm{d}x - H_n(x_0,\epsilon)\right],
\end{eqnarray}
where
\begin{eqnarray}
H_n(x_0,\epsilon)&=&0, \;\;\; n=0\label{fox1} \\
H_n(x_0,\epsilon)&=& \sum_{k=0}^{n-1} \frac{f^{(k)}(x_0)}{k!(n-k)} \frac{(1-(-1)^{n-k})}{\epsilon^{n-k}},\;\;\; n=1, 2, \dots  , \label{fox2}
\end{eqnarray}
in which $\#$ is the Cauchy Principal Value for $n=0$ and $\#$ is the Finite-Part Integral for positive integer $n$. Fox referred to equation \ref{fox} as the principal value, appropriately so as the CPV is a special case; however, we will continue to call equation \ref{fox} as the finite part integral, in accordance with current literature. The limit in equation \ref{fox} exists if $f(x)$ possesses derivatives up to order $n$ and satisfies a Holder condition. 

While the Cauchy principal value and the finite-part of the divergent  integrals given by expression \ref{divergent} have been studied extensively and have reached the status of an orthodoxy in applied mathematics \cite{lifanov,ang,pipkin,kanwal}, in this paper we point to another perspective on the integral which, we believe, should have had been much earlier appreciated but somehow had escaped the researchers of the field. The basic idea promoted here has already been intimated by Hadamard in considering a family of divergent integrals similar to equation \ref{divergent} \cite{hadamard}, but it was not developed further. Under certain conditions on the function $f(x)$, here we assign a value to the integral \ref{divergent} by the average of the values of the same integral when the contour of integration is displaced above and below the pole at $z=x_0$ while keeping the limits of the integration fixed. We will refer to this value as the Analytic Principal Value (APV), for a reason to made clear below. Our introduction of the APV is motivated by the Sokhotski-Plemelj theorem (SPT) for the CPV \cite{sokho,plemelj} and its generalization due to Fox \cite{fox}. According to SPT the CPV is the average of the boundary values of the analytic function, $\Phi(z)$, obtained from the integral \ref{divergent} (with $n=0$) by replacing $x_0$ with the complex variable $z$; the boundary values are the values of $\Phi(z)$ as $z$ approaches the singular point $x_0$ from above and below (the real-axis). 

In our definition of the Analytic Principal Value, we do the opposite in Sokhostski-Plemelj theorem. In the SPT the singular point $x_0$ is lifted out of the path of integration with the replacement $x_0\rightarrow z$, the path being fixed. Here the point $x_0$ is fixed but the contour of integration is deformed above and below the singularity, defining two families of paths separated by the pole, with each family having a unique value. The simple average of these values is our APV. We will show that the Analytic Principal Value is equal to the Cauchy Principal Value for $n=0$ and to the Finite-Part Integrals for positive integer $n$. The APV then obviates the need to introduce boundary values. In fact, the SPT and its generalization can be seen as a specific realization of the Analytic Principal Value; close scrutiny, for example, of a particular implementation of the limiting procedure in the SPT shows that it is no more than a special computation of the APV \cite{pipkin}. 

A feature of the APV is that its computation does not require the need for a limiting process. It assumes the form of an absolutely convergent integral, so that the CPV and the FPI are themselves values of absolutely convergent integrals.  That is the divergent integrals \ref{divergent}, interpreted as APVs, are convergent integrals in disguise. Beside its conceptual appeal, such integral representation of the CPV and the FPI make them amenable to computation using standard numerical quadratures, obviating the need for specialized algorithms specifically tailored for them \cite{cris,bia}. Furthermore, an (absolutely convergent) integral representation allows an asymptotic analysis of the CPV and the FPI \cite{guer} using the already established methods of asymptotic analysis for regular integrals \cite{wong}.

This paper is organized as follows. In Section-\ref{principalvalue} we formalize the definition of the Analytic Principal Value of the given family of divergent integrals. In Section-\ref{cpvfpi} we show that the Principal Value equals the Cauchy Principal Value and the Finite-Part Integral. In Section-\ref{spft} we discuss the relationship between our results and the  Sokhotski-Plemelj theorem for the CPV and its generalization due to Fox for the FPI. In Section-\ref{example} we apply the definition of the Analytic Principal Value to a specific case, and demonstrate the utility of the absolutely convergent integral representation of the CPV and the PFI. In Section-\ref{conclusion} we conclude and point to further applications of the Principal Value and raise an open problem.

\section{The Analytic Principal Value Integral}\label{principalvalue}

Given the function $f(x)$ in the divergent integral \ref{divergent}, let us introduce the complex valued function $f(z)$ obtained by replacing the real variable $x$ with the complex variable $z$ in $f(x)$. We will refer to $f(z)$ as the complex extension of $f(x)$. We require that there exists a neighborhood $R$ that encloses the strip $[a,b]$ and in this region $f(z)$ is analytic. $f(z)$ can have an infinite number of poles as long as none of them are in $R$. Denote the punctured domain by $R_{x_0}=R\ \backslash\{x_0\}$. Also denote $\Gamma^+$ the set of all continuous, non-self-intersecting paths contained in $R_{x_0}$ that start at $a$ and end at $b$, and that pass the pole at $z=x_0$ above the real axis; and $\Gamma^-$ the set of all similar paths in $R_{x_0}$  
that pass the pole below the real axis (see Figure-1). 

Due to the analyticity of $f(z)/(z-x_0)^{n+1}$ in the region $R_{x_0}$, the value of the integral
\begin{equation}\label{contour}
\int_{\gamma}\frac{f(z)}{(z-x_0)^{n+1}} \mathrm{d}z
\end{equation}
does not depend on $\gamma$ when $\gamma$ is restricted on $\Gamma^+$ or $\Gamma^-$ only. That is there is a single value to the integral \ref{contour} for all paths $\gamma^+$ in $\Gamma^+$, which we denote by $\mathrm{Int}^+(x_0)$; similarly for all paths $\gamma^-$ in $\Gamma^-$, which we denote by $\mathrm{Int}^-(x_0)$. However, due to the pole at $z=x_0$, the value of the integrals $\mathrm{Int}^+(x_0)$ and $\mathrm{Int}^-(x_0)$ are generally not equal. We now define the Analytic Principal Value of the divergent integral \ref{divergent} as follows.
\begin{definition} 
	Let $f(x)$ admit a complex extension $f(z)$ that is analytic in a neighborhood containing the strip $[a,b]$. Then the Analytic Principal Value of the divergent integral, to be denoted by ${\scriptstyle \backslash\!\!\!\!\backslash\!\!\!\!} { \int}$,  is given by the simple average of $\mathrm{Int}^+(x_0)$ and $\mathrm{Int}^-(x_0)$,
\begin{equation}
\backslash\!\!\!\!\backslash\!\!\!\!\!\int_a^b \frac{f(x)}{(x-x_0)^{n+1}}\,\mathrm{d}x =\frac{1}{2}\left[ \mathrm{Int}^+(x_0) + \mathrm{Int}^-(x_0)\right]\label{principal} .
\end{equation}
\end{definition}

\begin{lemma} For $n=0, 1, 2, \dots$
	 \begin{equation}\label{relation}
	\mathrm{Int}^-(x_0) - \mathrm{Int}^+(x_0)= 2 \pi i \frac{f^{(n)}(x_0)}{n!},
	\end{equation} 
 \begin{equation}\label{form1}
\backslash\!\!\!\!\backslash\!\!\!\!\!\int_a^b \frac{f(x)}{(x-x_0)^{n+1}}\,\mathrm{d}x = \mathrm{Int}^+(x_0) + i \pi \frac{f^{(n)}(x_0)}{n!}.
\end{equation}
 \begin{equation}\label{form2}
\backslash\!\!\!\!\backslash\!\!\!\!\!\int_a^b \frac{f(x)}{(x-x_0)^{n+1}}\,\mathrm{d}x = \mathrm{Int}^-(x_0) - i \pi \frac{f^{(n)}(x_0)}{n!}.
\end{equation}
\end{lemma}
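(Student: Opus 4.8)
The plan is to obtain the jump relation \eqref{relation} by comparing two representatives, one path $\gamma^+\in\Gamma^+$ and one path $\gamma^-\in\Gamma^-$, chosen so that their difference forms a closed contour encircling the pole. Concretely, I would fix a small $\rho>0$ with the disk $|z-x_0|\le\rho$ contained in $R$, and choose $\gamma^+$ to consist of the real segment $[a,x_0-\rho]$, the upper semicircle $z=x_0+\rho e^{i\theta}$ traversed from $\theta=\pi$ to $\theta=0$, and the real segment $[x_0+\rho,b]$; similarly take $\gamma^-$ with the lower semicircle traversed from $\theta=\pi$ to $\theta=2\pi$. Since the straight portions coincide,
\[
\mathrm{Int}^-(x_0)-\mathrm{Int}^+(x_0)=\left(\int_{C_\rho^-}-\int_{C_\rho^+}\right)\frac{f(z)}{(z-x_0)^{n+1}}\,\mathrm{d}z,
\]
where $C_\rho^\pm$ are the two semicircular arcs as parametrized above. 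Concatenating $-C_\rho^+$ with $C_\rho^-$ produces a full circle about $x_0$ traversed once clockwise (or, after reversing orientation, counterclockwise), so the right-hand side equals $-\oint_{|z-x_0|=\rho}f(z)(z-x_0)^{-n-1}\,\mathrm{d}z$ up to a sign I will pin down by the $\theta$-parametrization. By the residue theorem — or directly, by substituting the Taylor expansion $f(z)=\sum_k f^{(k)}(x_0)(z-x_0)^k/k!$, valid in the disk, and integrating term by term using $\oint (z-x_0)^{m}\,\mathrm{d}z = 2\pi i\,\delta_{m,-1}$ — only the $k=n$ term survives and contributes $2\pi i\, f^{(n)}(x_0)/n!$. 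This gives \eqref{relation}, with the orientation bookkeeping fixing the sign so that $\mathrm{Int}^-$ exceeds $\mathrm{Int}^+$; intuitively the lower path, which must swing around $x_0$, picks up the extra residue relative to the upper path.

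Once \eqref{relation} is in hand, formulas \eqref{form1} and \eqref{form2} are immediate algebra. From the Definition, $\backslash\!\!\!\backslash\!\!\!\!\int = \tfrac12(\mathrm{Int}^++\mathrm{Int}^-)$; writing $\mathrm{Int}^-=\mathrm{Int}^+ + 2\pi i f^{(n)}(x_0)/n!$ and substituting yields $\backslash\!\!\!\backslash\!\!\!\!\int = \mathrm{Int}^+ + i\pi f^{(n)}(x_0)/n!$, which is \eqref{form1}; symmetrically, writing $\mathrm{Int}^+=\mathrm{Int}^- - 2\pi i f^{(n)}(x_0)/n!$ gives \eqref{form2}. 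I would also note explicitly that the path-independence of $\mathrm{Int}^+$ and $\mathrm{Int}^-$ within their respective families, already established in the text before the Definition, is what makes the specific semicircular choice legitimate: the values computed along these convenient representatives are the values for the whole families.

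The one genuine point requiring care — the "hard part," though it is more bookkeeping than difficulty — is the orientation and sign accounting when I glue the two semicircular arcs into a closed loop, since each arc in $\gamma^\pm$ is traversed in the direction inherited from going $a\to b$, not in the direction of increasing $\theta$ on a standard positively oriented circle. I would handle this by writing out both arc integrals explicitly as integrals in $\theta$ (upper arc: $\theta:\pi\to0$; lower arc: $\theta:\pi\to2\pi$), so that the combination $\int_{C_\rho^-}-\int_{C_\rho^+}$ becomes an unambiguous integral over $\theta\in[0,2\pi]$ in a definite direction, and then evaluate. A brief remark that the result is independent of $\rho$ (again by analyticity in the punctured disk) completes the argument. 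No new analytic input beyond Cauchy's theorem/the residue theorem and the local Taylor expansion of the analytic function $f$ is needed.
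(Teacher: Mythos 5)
Your proposal is correct and follows essentially the same route as the paper: the jump relation comes from applying the residue theorem to the closed contour $\gamma^-+(-\gamma^+)$ enclosing the pole at $x_0$ (your semicircular representatives and $\theta$-bookkeeping just make the orientation explicit), and \eqref{form1}--\eqref{form2} then follow by the same algebra from the definition \eqref{principal}. No gaps.
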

\begin{proof}
	Equation \ref{relation} follows from the residue theorem and the fact that the closed path $\gamma^-+(-\gamma^+)$, for every $\gamma^{\pm}\in\Gamma^{\pm}$, encloses the pole $z=x_0$. Equations \ref{form1} and \ref{form2} are consequences of the definition of the Principal Value given by equation \ref{principal} and the relationship between $\mathrm{Int}^+(x_0)$ and $\mathrm{Int}^-(x_0)$ given by equation \ref{relation}.
\end{proof}

Equations \ref{principal}, \ref{form1} and \ref{form2} are equivalent and any one of them can be used to compute the Analytic Principal Value. Equation \ref{principal} requires two paths, while equations \ref{form1} and \ref{form2} require one path each. The integrals involved in these expressions are absolutely convergent, so that the family of divergent integrals given by equation \ref{divergent} are in fact absolutely convergent integrals when interpreted as Analytic Principal Values. But since the choice of path or paths in $\Gamma^{\pm}$ is arbitrary, the divergent integrals assume various but equivalent (convergent) integral representations. Any one of these representation can be chosen to conveniently compute the principal value, either analytically or numerically.

The above definition of the Analytic Principal Value extends readily when the path lies in the complex plane. 

\begin{figure}
	\includegraphics[scale=0.4]{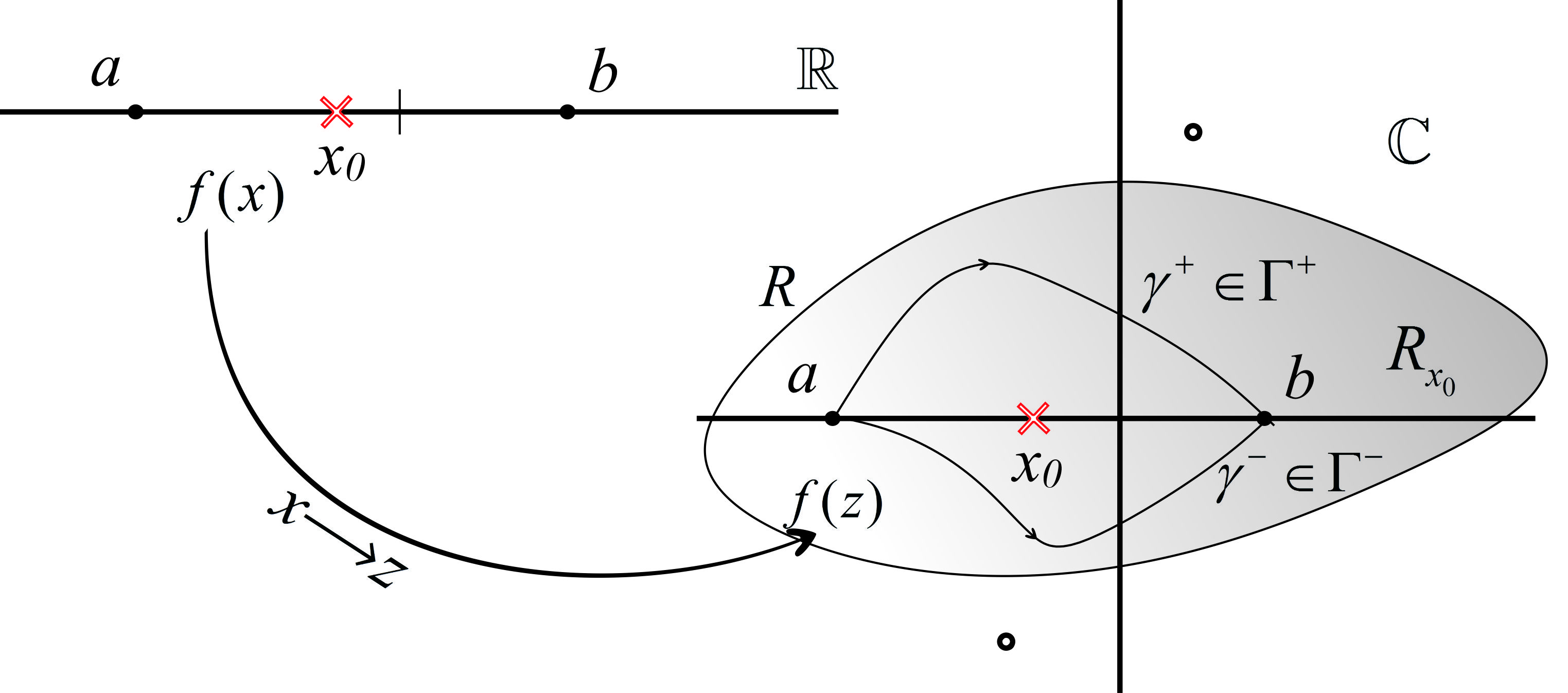}
	\caption{The mapping of the region of integration from the real line to a region in the complex plane. If $f(z)$ has poles, the region $R$ is chosen such that the poles, indicated by the hollow circles, are outside $R$.}
	\label{fig:boat1}
\end{figure}

\section{The Cauchy Principal Value and the Finite Part Integrals as Values of the Analytic Principal Value}\label{cpvfpi}
\begin{proposition} For $n=0$ the Analytic Principal Value is equal to the Cauchy principal value,
	\begin{equation}\label{cpv}
	\bbint{a}{b}\frac{f(x)}{x-x_0} \mathrm{d}x = \lim_{\epsilon\rightarrow 0}\left[\int_a^{x_0-\epsilon} \frac{f(x)}{x-x_0}\mathrm{d}x + \int_{x_0+\epsilon}^b\frac{f(x)}{x-x_0}\mathrm{d}x\right] .
	\end{equation}
\end{proposition}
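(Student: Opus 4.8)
The plan is to use the path-independence of $\mathrm{Int}^+(x_0)$ within $\Gamma^+$ together with equation \ref{form1} of the Lemma, specialized to $n=0$. The idea is to choose, for each sufficiently small $\epsilon>0$, a convenient representative $\gamma_\epsilon^+\in\Gamma^+$ whose value is by definition $\mathrm{Int}^+(x_0)$, but which can also be decomposed into the two truncated real integrals appearing in equation \ref{cpv} plus a small circular-arc contribution that we can evaluate in the limit.

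First I would fix $\epsilon_0>0$ small enough that the closed disk of radius $\epsilon_0$ centered at $x_0$ lies inside $R$, and for $0<\epsilon\le\epsilon_0$ take $\gamma_\epsilon^+$ to be the concatenation of the segment $[a,x_0-\epsilon]$, the upper semicircle $S_\epsilon=\{x_0+\epsilon e^{i\theta}:\theta \text{ running from }\pi\text{ to }0\}$, and the segment $[x_0+\epsilon,b]$. This is a continuous, non-self-intersecting path in $R_{x_0}$ from $a$ to $b$ that passes above the pole, hence $\gamma_\epsilon^+\in\Gamma^+$, and so for every $\epsilon\in(0,\epsilon_0]$,
\[
\mathrm{Int}^+(x_0)=\int_a^{x_0-\epsilon}\frac{f(x)}{x-x_0}\,\mathrm{d}x+\int_{S_\epsilon}\frac{f(z)}{z-x_0}\,\mathrm{d}z+\int_{x_0+\epsilon}^b\frac{f(x)}{x-x_0}\,\mathrm{d}x .
\]
Next I would evaluate the arc term as $\epsilon\to0^+$. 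Parametrizing $z=x_0+\epsilon e^{i\theta}$ gives $\int_{S_\epsilon}f(z)/(z-x_0)\,\mathrm{d}z=i\int_\pi^0 f(x_0+\epsilon e^{i\theta})\,\mathrm{d}\theta$, and since $f$ is analytic, hence uniformly continuous on the closed disk of radius $\epsilon_0$, one has $f(x_0+\epsilon e^{i\theta})\to f(x_0)$ uniformly in $\theta$, so the arc integral tends to $i\int_\pi^0 f(x_0)\,\mathrm{d}\theta=-i\pi f(x_0)$. Rearranging the displayed identity then shows that the bracketed sum in equation \ref{cpv} converges and that
\[
\lim_{\epsilon\to0^+}\left[\int_a^{x_0-\epsilon}\frac{f(x)}{x-x_0}\,\mathrm{d}x+\int_{x_0+\epsilon}^b\frac{f(x)}{x-x_0}\,\mathrm{d}x\right]=\mathrm{Int}^+(x_0)+i\pi f(x_0).
\]
By equation \ref{form1} with $n=0$, the right-hand side is precisely the Analytic Principal Value, which completes the argument; the symmetric check, using the lower semicircle and equation \ref{form2}, is identical.

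The only step that needs any care — and it is mild — is the uniform passage to the limit inside the arc integral, which follows from continuity of $f$ near $x_0$; the remaining work is the routine verification that each $\gamma_\epsilon^+$ is an admissible path and the correct tracking of the orientation of $S_\epsilon$. It is worth noting that this argument does not assume the Cauchy principal value exists: under the standing analyticity hypothesis it produces the existence of the limit in equation \ref{cpv} as a byproduct, and identifies it with the APV.
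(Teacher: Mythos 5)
Your proposal is correct and follows essentially the same route as the paper: decompose a representative path in $\Gamma^+$ into the two truncated real segments plus the upper semicircle of radius $\epsilon$, evaluate the arc contribution in the limit, and invoke equation \ref{form1} to identify the result with the APV. The only cosmetic difference is that you justify the limit of the arc integral by uniform continuity of $f$ near $x_0$, whereas the paper uses the first-order Taylor expansion with an explicit bound on the remainder $f_1$; both yield the same conclusion, including the observation that existence of the Cauchy principal value limit comes out as a byproduct.
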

\begin{proof}
	To compute the analytic principal value, it is sufficient to chose a convenient path in $\Gamma^+$ and use equation \ref{form1}. We use the contour $\bar{\gamma}^+$ shown in Figure-2. Then
	\begin{equation}\label{precpv}
	\bbint{a}{b} \frac{f(x)}{x-x_0}\mathrm{d}x=\int_a^{x_0-\epsilon} \frac{f(x)}{x-x_0}\mathrm{d}x + \int_{x_0+\epsilon}^b\frac{f(x)}{x-x_0}\mathrm{d}x + \int_{C^+}\frac{f(z)}{z-x_0}\mathrm{d}z + i\pi f(x_0)
	\end{equation} 
	where $C^+$ is the semi-circle centered at the origin with radius $\epsilon$. We only need to evaluate the integral along $C^+$. We parametrize the semi-circle by $z=x_0+\epsilon \mathrm{e}^{i\theta}$, where $\pi>\theta>0$. Because $f(z)$ is analytic in $R$, we can use the Taylor expansion theorem to expand $f(x_0+\epsilon \mathrm{e}^{i\theta})$ about $z=x_0$ up to the order $O(\epsilon)$, 
	\begin{equation}
	f(x_0+\epsilon\mathrm{e}^{i\theta})=f(x_0) + \epsilon \mathrm{e}^{i\theta} f_1(x_0+\epsilon\mathrm{e}^{i\theta}),
	\end{equation}
	where $f_1(z)$ is an analytic function \cite{ahlfors}. From the Taylor expansion theorem, we have the bound
	\begin{equation}\label{bound}
	\left| f_1(x_0+\epsilon\mathrm{e}^{i\theta})\right|\leq \frac{M}{\rho(\rho-\epsilon)},
	\end{equation}
	where $M$ is the maximum of $|f(z)|$ in $R$, and $\rho$ is any positive constant with $\rho>\epsilon$ and such that the disk $|z-x_0| \leq \rho$  is contained in $R$. We can choose $\epsilon$ sufficiently small to satisfy the last requirement. Then we have
	\begin{equation}
	\int_{C^+}\frac{f(z)}{z-x_0}\mathrm{d}z= -i\pi f(x_0) +\epsilon i \int_{\pi}^{0} \mathrm{e}^{i\theta} f_1(x_0+\epsilon\mathrm{e}^{i\theta}) \mathrm{d}\theta .
	\end{equation}

	We substitute this back into equation \ref{precpv} and the residue term cancels out. We obtain
	 \begin{equation}\label{precpv2}
	 \bbint{a}{b} \frac{f(x)}{x-x_0}\mathrm{d}x=\int_a^{x_0-\epsilon} \frac{f(x)}{x-x_0}\mathrm{d}x + \int_{x_0+\epsilon}^b\frac{f(x)}{x-x_0}\mathrm{d}x +\epsilon i \int_{\pi}^{0}\mathrm{e}^{i\theta} f_1(x_0+\epsilon\mathrm{e}^{i\theta}) \mathrm{d}\theta  .
	 \end{equation} 
	 Using the bound given by \ref{bound} we obtain the inequality 
	 \begin{eqnarray}\label{ineq1}
	 &&\left|\bbint{a}{b} \frac{f(x)}{x-x_0}\mathrm{d}x-\left[\int_a^{x_0-\epsilon} \frac{f(x)}{x-x_0}\mathrm{d}x + \int_{x_0+\epsilon}^b\frac{f(x)}{x-x_0}\mathrm{d}x \right] \right|\nonumber \\
	 &&\hspace{44mm}\leq \epsilon \int^{\pi}_{0} |f_1(x_0+\epsilon\mathrm{e}^{i\theta})| \mathrm{d}\theta \leq \frac{\pi M \epsilon}{\rho(\rho-\epsilon)}
	 \end{eqnarray}
	 Since the right hand side of the inequality \ref{ineq1} becomes arbitrarily small for arbitrarily small $\epsilon$, we obtain the equality \ref{cpv} in the limit as $\epsilon$ approaches zero.
\end{proof}

\begin{proposition}
	For positive integer $n$, the Analytic Principal Value is equal to the finite-part of the divergent integral,
	\begin{eqnarray}\label{general}
	\bbint{a}{b} \frac{f(x)}{(x-x_0)^{n+1}}\mathrm{d}x&=&\lim_{\epsilon\rightarrow 0}\left[\int_a^{x_0-\epsilon} \frac{f(x)}{(x-x_0)^{n+1}}\mathrm{d}x + \int_{x_0+\epsilon}^b\frac{f(x)}{(x-x_0)^{n+1}}\mathrm{d}x \right. \nonumber \\
	&& \left.\hspace{18mm} -\sum_{k=0}^{n-1}\frac{f^{(k)}(x_0)}{k! (n-k)} \frac{(1-(-1)^{n-k})}{\epsilon^{n-k}} \right] .
	\end{eqnarray}
\end{proposition}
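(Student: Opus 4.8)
\medskip\noindent\textbf{Proof proposal.}
The plan is to reproduce the proof of the previous proposition, the only change being that the Taylor expansion of $f$ about $x_0$ must now retain all terms through $(z-x_0)^n$. Since we are in the case $n\ge1$, I would compute the Analytic Principal Value from \ref{form1}, evaluating $\mathrm{Int}^+(x_0)$ along the same indented contour $\bar\gamma^+\in\Gamma^+$ used before: the real segments $[a,x_0-\epsilon]$ and $[x_0+\epsilon,b]$ joined by the upper semicircle $C^+:\ z=x_0+\epsilon\mathrm{e}^{i\theta}$ with $\theta$ decreasing from $\pi$ to $0$, which lies in $R_{x_0}$ once $\epsilon$ is small enough. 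This gives
\[
\bbint{a}{b}\frac{f(x)}{(x-x_0)^{n+1}}\mathrm{d}x = \left[\int_a^{x_0-\epsilon}+\int_{x_0+\epsilon}^{b}\right]\frac{f(x)}{(x-x_0)^{n+1}}\mathrm{d}x + \int_{C^+}\frac{f(z)}{(z-x_0)^{n+1}}\mathrm{d}z + i\pi\frac{f^{(n)}(x_0)}{n!},
\]
leaving only the integral along $C^+$ to be evaluated.

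For that integral I would use Taylor's theorem (exactly as in the bound \ref{bound}, only carried further) to write $f(x_0+\epsilon\mathrm{e}^{i\theta})=\sum_{k=0}^{n}\frac{f^{(k)}(x_0)}{k!}\epsilon^k\mathrm{e}^{ik\theta}+\epsilon^{n+1}\mathrm{e}^{i(n+1)\theta}f_{n+1}(x_0+\epsilon\mathrm{e}^{i\theta})$ with $f_{n+1}$ analytic in $R$ and bounded on $C^+$, uniformly in $\theta$, by a constant of the type in \ref{bound} furnished by the Cauchy estimates. Dividing by $(z-x_0)^{n+1}$ and integrating term by term along $C^+$ leaves the elementary integrals $\int_\pi^0\mathrm{e}^{i(k-n)\theta}\mathrm{d}\theta$, which equal $-\pi$ for $k=n$ and $(1-(-1)^{k-n})/(i(k-n))$ for $0\le k<n$. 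The $k=n$ contribution is exactly $-i\pi f^{(n)}(x_0)/n!$; the contributions with $0\le k<n$, after using $k-n=-(n-k)$ and $(-1)^{k-n}=(-1)^{n-k}$, reproduce precisely $-H_n(x_0,\epsilon)$ with $H_n$ as in \ref{fox2}; and the remainder contributes $i\epsilon\int_\pi^0\mathrm{e}^{i\theta}f_{n+1}(x_0+\epsilon\mathrm{e}^{i\theta})\mathrm{d}\theta$, whose modulus is at most $\pi\epsilon\sup_{C^+}|f_{n+1}|$.

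Substituting this evaluation of $\int_{C^+}$ back into the displayed identity, the two occurrences of $i\pi f^{(n)}(x_0)/n!$ cancel, and one obtains, for every sufficiently small $\epsilon>0$,
\[
\bbint{a}{b}\frac{f(x)}{(x-x_0)^{n+1}}\mathrm{d}x - \left[\int_a^{x_0-\epsilon}+\int_{x_0+\epsilon}^{b}\right]\frac{f(x)}{(x-x_0)^{n+1}}\mathrm{d}x + H_n(x_0,\epsilon) = i\epsilon\int_\pi^0\mathrm{e}^{i\theta}f_{n+1}(x_0+\epsilon\mathrm{e}^{i\theta})\mathrm{d}\theta .
\]
Since the right-hand side is $O(\epsilon)$ by the uniform bound on $f_{n+1}$, letting $\epsilon\to0$ yields \ref{general}. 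I expect the only real obstacle to be bookkeeping: tracking the signs and performing the reflection $k\mapsto n-k$ so that the subtracted terms match $H_n(x_0,\epsilon)$ from \ref{fox2} term by term, and checking that the Cauchy estimate for $f_{n+1}$ is genuinely uniform in $\theta$ (it is, since it depends only on $|z-x_0|=\epsilon$). That the individual real integrals and $H_n(x_0,\epsilon)$ separately diverge as $\epsilon\to0$ is immaterial, since they enter the conclusion only through the combination above, which stays bounded.
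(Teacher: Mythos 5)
Your proposal is correct and takes essentially the same route as the paper's proof: the same indented contour $\bar\gamma^+$ combined with \ref{form1}, the same order-$n$ Taylor expansion with analytic remainder $f_{n+1}$ bounded via a Cauchy estimate, and the same $O(\epsilon)$ argument to pass to the limit. Your explicit evaluation of the elementary integrals $\int_\pi^0 \mathrm{e}^{i(k-n)\theta}\,\mathrm{d}\theta$ merely spells out the bookkeeping that the paper summarizes as ``performing the integrations.''
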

\begin{proof}
We use the same contour of integration to calculate the analytic principal value and obtain
	\begin{eqnarray}\label{ddd}
	\bbint{a}{b} \frac{f(x)}{(x-x_0)^{n+1}}\mathrm{d}x&=&\int_a^{x_0-\epsilon} \frac{f(x)}{(x-x_0)^{n+1}}\mathrm{d}x + \int_{x_0+\epsilon}^b\frac{f(x)}{(x-x_0)^{n+1}}\mathrm{d}x \nonumber \\
	&&\hspace{12mm}+ \int_{C^+}\frac{f(z)}{(z-x_0)^{n+1}}\mathrm{d}z + i\pi \frac{f^{(n)}(x_0)}{n!} .
	\end{eqnarray}
	To evaluate the integral around the semi-circle, we again parametrize the semi-circle in the same way we did above and expand $f(x_0+\epsilon \mathrm{e}^{i\theta})$ at least up to the order $O(\epsilon^{n+1})$,
	\begin{equation}\label{expand2}
	f(x_0+\epsilon \mathrm{e}^{i\theta}) = \sum_{k=0}^n \frac{f^{(k)}(x_0)}{k!} \epsilon^k \mathrm{e}^{i k \theta} + \epsilon^{n+1} \mathrm{e}^{i\theta (n+1) } f_{n+1}(x_0+\epsilon\mathrm{e}^{i\theta}),
	\end{equation} 
	where $f_{n+1}(z)$ is an analytic function. Again from the Taylor expansion theorem, we have the bound
	\begin{equation}
	| f_{n+1}(x_0+\epsilon\mathrm{e}^{i\theta})|\leq \frac{M}{\rho^n (\rho-\epsilon)},
	\end{equation}
	where $M$ and $\rho$ are as above. 	Substituting the expansion \ref{expand2} back into equation \ref{ddd} and performing the integrations, we obtain
	\begin{eqnarray}\label{ddd2}
	\bbint{a}{b} \frac{f(x)}{(x-x_0)^{n+1}}\mathrm{d}x&=&\int_a^{x_0-\epsilon} \frac{f(x)}{(x-x_0)^{n+1}}\mathrm{d}x + \int_{x_0+\epsilon}^b\frac{f(x)}{(x-x_0)^{n+1}}\mathrm{d}x \nonumber \\
	&&\hspace{-8mm} -\sum_{k=0}^{n-1}\frac{f^{(k)}(x_0)}{k! (n-k)} \frac{(1-(-1)^{n-k})}{\epsilon^{n-k}} + i\epsilon \int_{\pi}^{0}\mathrm{e}^{i\theta} f_{n+1}(x_0+\epsilon\mathrm{e}^{i\theta}) \mathrm{d}\theta.
	\end{eqnarray}
	By the same arguments we used in the previous Proposition, we obtain equation \ref{general} from equation \ref{ddd2} in the limit as $\epsilon$ approaches zero. 
\end{proof}

\begin{figure}
	\includegraphics[scale=0.4]{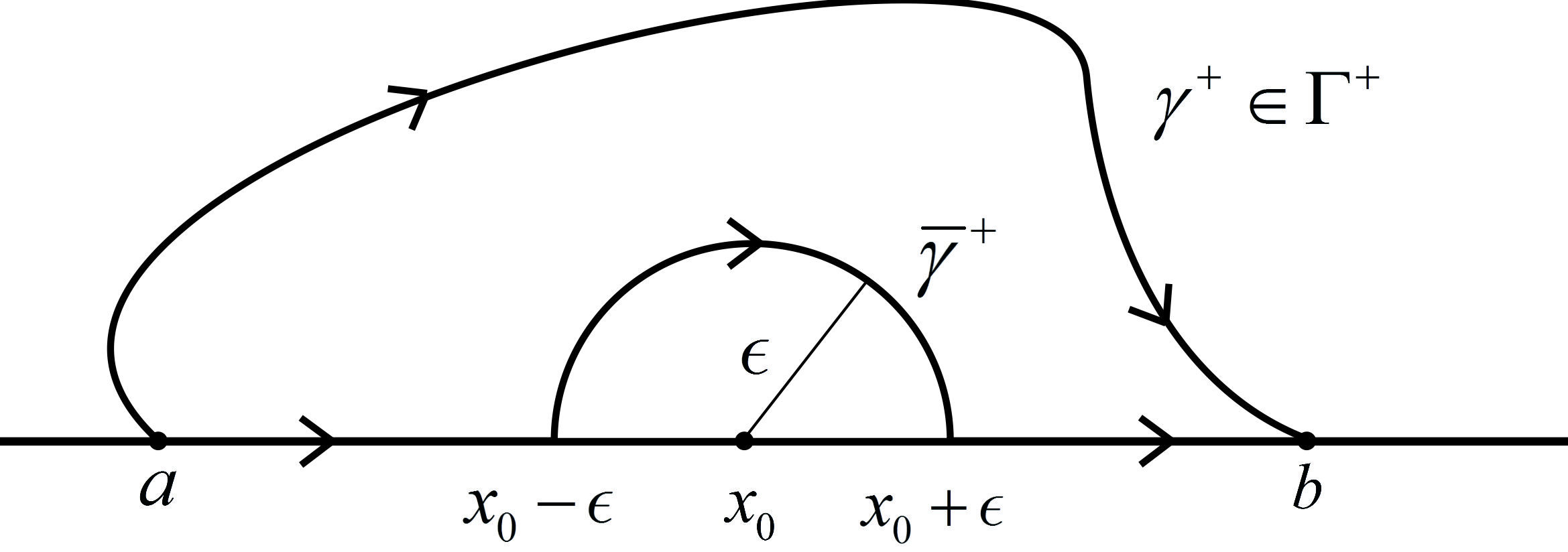}
	\caption{The contour of integration $\gamma^+$ is deformed into the contour $\bar{\gamma}^+$. The value of the contour integral along these paths are equal, in particular, the integral does not depend on $\epsilon$.}
	\label{fig:boat2}
\end{figure}

As pointed out earlier, the APV assumes an absolutely convergent integral representation. This, together with Propositions-1 and-2, leads us to the following statement of our main result.
\begin{theorem} 
	If the function $f(x)$ in the divergent integral \ref{divergent} has an analytic complex extension, $f(z)$, in a region in the complex plane containing the integration interval $[a,b]$, then there exists a family of absolutely convergent integral with a value equal to the Cauchy principal value for $n=0$ or to the finite part integral for $n=1, 2, \dots$. The family of integrals is precisely the different integral representations of the Analytic Principal Value of the divergent integral. Each integral representation corresponds to a path in $\Gamma^{\pm}$ or to a pair of paths, one from $\Gamma^+$ and another from $\Gamma^-$. 
\end{theorem}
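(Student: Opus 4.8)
The plan is to assemble the theorem from the ingredients already established — the definition of the APV, Lemma 1, and Propositions 1 and 2. The one point not yet made explicit is that each of $\mathrm{Int}^+(x_0)$ and $\mathrm{Int}^-(x_0)$ is the value of an absolutely convergent integral; everything else is bookkeeping.

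First I would fix any admissible path $\gamma^+\in\Gamma^+$. Being the continuous image of a compact parameter interval, $\gamma^+$ is a compact subset of $R_{x_0}$ that does not contain $x_0$, hence it has a strictly positive distance $\delta>0$ from $x_0$; moreover $f$, being analytic, is continuous and therefore bounded on $\gamma^+$, say $|f(z)|\le M_{\gamma^+}$ there. Since the contour integral in \ref{contour} presupposes $\gamma^+$ to be rectifiable, of finite length $L(\gamma^+)$, we get $\int_{\gamma^+}\bigl|f(z)(z-x_0)^{-n-1}\bigr|\,|\mathrm{d}z|\le M_{\gamma^+}L(\gamma^+)\delta^{-n-1}<\infty$, so $\mathrm{Int}^+(x_0)$ is indeed the value of an absolutely convergent integral; the identical estimate, with any $\gamma^-\in\Gamma^-$, handles $\mathrm{Int}^-(x_0)$.

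Second, by Definition 1 the APV equals $\tfrac12\int_{\gamma^+}f(z)(z-x_0)^{-n-1}\,\mathrm{d}z+\tfrac12\int_{\gamma^-}f(z)(z-x_0)^{-n-1}\,\mathrm{d}z$, the arithmetic mean of two absolutely convergent integrals, hence again absolutely convergent (and expressible as a single integral of $\tfrac12 f(z)(z-x_0)^{-n-1}$ over the chain $\gamma^{+}+\gamma^{-}$); equivalently, by \ref{form1} it is the single absolutely convergent integral $\mathrm{Int}^+(x_0)$ shifted by the explicit constant $i\pi f^{(n)}(x_0)/n!$. Letting $\gamma^+$ range over $\Gamma^+$, $\gamma^-$ over $\Gamma^-$, and the pair $(\gamma^+,\gamma^-)$ over $\Gamma^+\times\Gamma^-$ produces the advertised family of representations; the path-independence of \ref{contour} on each of $\Gamma^{\pm}$ together with Lemma 1 shows that every member of this family has one and the same value. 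Proposition 1 identifies that common value with the Cauchy principal value when $n=0$, and Proposition 2 identifies it with the Hadamard finite-part integral when $n\ge1$; this is the assertion of the theorem.

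The closest thing to an obstacle is deciding what the word ``precisely'' should mean. I would not read it as a uniqueness claim — there are certainly unrelated absolutely convergent integrals sharing the same numerical value — but rather as the statement that the assignment of an integral representation to a path, or to a pair of paths in $\Gamma^+\times\Gamma^-$, surjects onto the family in question and that each image carries the claimed value. Under that reading the theorem is an immediate corollary of the results already proved, and in particular it requires no limiting process $\epsilon\to0$, in contrast with \ref{fox}, \ref{cpv} and \ref{general}.
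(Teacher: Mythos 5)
Your proof is correct and follows the same route the paper takes: the theorem is presented there as an immediate consequence of the definition of the APV, Lemma 1, and Propositions 1 and 2, with the absolute convergence of the contour integrals simply asserted. Your compactness/positive-distance estimate supplies the one detail the paper leaves implicit, but the overall argument is the same assembly of previously established results.
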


\section{The Sokhotski-Plemelj-Fox Theorem and the Analytic Principal Value}\label{spft}
Lemma-1 looks exactly as the Sokhotski-Plemelj theorem (for $n=0$) and its generalized version by Fox. However, they are not the same, but their similarity in form indicates an existing relationship between our Lemma and them. 

Let us consider the version of the Sokhotski-Plemelj and Fox Theorem in the real line, which we will refer to as the Sokhotski-Plemelj-Fox Theorem (SPFT). Central to the SPFT is the function 
\begin{equation}
	\Phi(z)=\int_a^b \frac{f(x)}{(x-z)^{n+1}}\mathrm{d}x ,
\end{equation}
where $z$ does not lie along the contour of integration which is the straight line from $a$ to $b$. The value of $\Phi(z)$ as it approaches any value $x_0\in(a,b)$ depends on the approach, depending on whether the limit is from above or from below the real axis. The limiting values are known as the boundary values of $\Phi(z)$ and they are given by the limiting values
\begin{equation}
	\Phi^{\pm}(x_0)=\lim_{y\rightarrow 0} \int_a^b \frac{f(x)}{(x-(x_0\pm iy))^{n+1}}\mbox{d}x .
\end{equation}

The Sokhotski-Plemelj-Fox Theorem is a statement on the relationship between these values and the Cauchy Principal Value or the Finite-Part Integrals, in particular
\begin{equation}\label{boundary1}
	\Phi^{+}(x_0)= \#\!\!\!\int_a^b \frac{f(x)}{(x-x_0)^{n+1}} \mbox{d}x + i \pi \frac{f^{(n)}(x_0)}{n!} .
\end{equation}
\begin{equation}\label{boundary2}
	\Phi^{-}(x_0)= \#\!\!\!\int_a^b \frac{f(x)}{( x-x_0)^{n+1}} \mbox{d}x - i \pi \frac{f^{(n)}(x_0)}{n!} ,
\end{equation}
where $\#$ takes on either the CPV ($n=0$) or the FPI ($n=1,2,\dots$). Adding this two gives
\begin{equation}\label{plem}
	\#\!\!\!\int_a^b \frac{f(x)}{(x-x_0)^{n+1}} \mbox{d}x = \frac{1}{2}\left[\Phi^+(x_0)+\Phi^-(x_0)\right] .
\end{equation}
That is the CPV and the FPI are the averages of the boundary values of $\Phi(z)$.

We observe that there is an exact correspondence between equations \ref{principal} and \ref{plem}, equations \ref{form1} and \ref{boundary1}, equations \ref{form2} and \ref{boundary2}. Since the Analytic Principal Value and the Cauchy Principal Value/Finite-Part Integrals are equal, we must have the equality
\begin{eqnarray}
	\Phi^{\pm}(x_0)=\mathrm{Int}^{\mp}(x_0) .
\end{eqnarray}
That is the value of the function $\Phi(z)$ at the boundary is equal to the integral on any path connecting $a$ and $b$ that is deformable to the cut, with the value depending on which side of the cut the path passes through. Then the boundary value in the SPT can be replaced by a contour integral. Since the boundary values $\Phi^{\pm}(x_0)$ can be interpreted as values for particular paths, the SPFT can be seen as a special case of Lemma-1. 

Of course we can still maintain the interpretation of the Sokohotski-Plemelj-Fox Theorem as a statement on the relationship between the boundary values of $\Phi(z)$ and the Cauchy Principal Value or the Finite-Part Integral. With this interpretation, the SPFT stands independent from our Lemma. However, Lemma-1, together with Propositions-1 and 2, now provides a way of computing for the boundary values without explicit evaluation of the function $\Phi(z)$ and then taking the required limit. Conversely we can take the SPFT to be a means of computing the integral values $\mathrm{Int}^{\pm}(x_0)$ in terms of the boundary values of $\Phi(z)$. Either way we have now more ways of obtaining the Cauchy Principal Value, and the Finite-Part Integral.

\begin{remark}
	The proof of Proposition-1 is exactly the implementation of the limiting operation in Sokhotski-Plemelj Theorem in obtaining the boundary values as performed in \cite{pipkin}. It can now be seen that such implementation is more properly interpreted as a specific computation of the Analytic Principal Value rather than the computation of the boundary value coming from the function $\Phi(z)$ itself; that is because the contour is supposed to be fixed in the SPT, and it is the complex variable $z$ that is supposed to approach the singular point $x_0$. 
\end{remark}

\section{The Cauchy principal value and the finite-part integral involving functions with entire complex extensions}\label{example}
Let us consider the case when  the complex extension, $f(z)$, is entire, so that it posses a Taylor series expansion at any point with an infinite radius of convergence. In particular the following expansion holds
\begin{equation}\label{ps}
	f(z)=\sum_{k=0}^{\infty} \frac{f^{(k)}(x_0)}{k!} (z-x_0)^k
\end{equation}
for all $z$ in the complex plane. This gives us the opportunity to evaluate explicitly the CPV and the FPI and compare them with the APV. Moreover, it will be instructive to demonstrate the independence of the APV on the parameter $\epsilon$.

\subsection{The Cauchy Principal Value and the Finite-Part Integral} The CPV and the FPI are known, but we give their derivations here to aid in the derivation of the Analytic Principal Value. Term by term integration, which we can do because the expansion \ref{ps} has an infinite radius of convergence, yields the following integrals,
\begin{eqnarray}
	\int_a^{x_0-\epsilon} \frac{f(x)}{(x-x_0)^{n+1}}\mbox{d}x &=& -\sum_{k=0}^{n-1} \frac{f^{(k)}(x_0)}{k!(n-k)} \left(\frac{1}{(-\epsilon)^{n-k}} - \frac{1}{(a-x_0)^{n-k}}\right) \nonumber \\
	&& + \frac{f^{(n)}(x_0)}{n!} \left(\ln \epsilon - \ln(x_0-a)\right)\nonumber \\
	&& + \sum_{k=n+1}^{\infty} \frac{f^{(k)}(x_0)}{k!(k-n)} \left((-\epsilon)^{k-n}-(a-x_0)^{k-n}\right)
\end{eqnarray}
\begin{eqnarray}
	\int_{x_0+\epsilon}^b \frac{f(x)}{(x-x_0)^{n+1}}\mbox{d}x &=& -\sum_{k=0}^{n-1} \frac{f^{(k)}(x_0)}{k!(n-k)} \left(\frac{1}{(b-x_0)^{n-k}} - \frac{1}{\epsilon^{n-k}}\right) \nonumber \\
	&& + \frac{f^{(n)}(x_0)}{n!} \left(\ln (b-x_0) - \ln\epsilon\right)\nonumber \\
	&& + \sum_{k=n+1}^{\infty} \frac{f^{(k)}(x_0)}{k!(k-n)} \left((b-x_0)^{k-n}-\epsilon^{k-n}\right)
\end{eqnarray}

Adding these two gives the desired integral in the extraction of the finite-part of the divergent integral,
\begin{eqnarray}\label{desired}
	&&\int_a^{x_0-\epsilon} \frac{f(x)}{(x-x_0)^{n+1}} \mathrm{d}x+ \int_{x_0+\epsilon}^{b} \frac{f(x)}{(x-x_0)^{n+1}} \mathrm{d}x \nonumber \\
	&&\hspace{10mm}= \frac{f^{(n)}(x_0)}{n!}\left[\ln(b-x_0)-\ln(x_0-a)\right]+\left[F_n(b-x_0)-F_n(a-x_0)\right]\nonumber\\
	&&\hspace{40mm} - \left[F_n(\epsilon) - F_n(-\epsilon)\right]
\end{eqnarray}
where
\begin{equation}\label{fn}
	F_n(s)=-\sum_{k=0}^{n-1} \frac{f^{(k)}(x_0)}{k! (n-k)} \frac{1}{s^{n-k}}   + \sum_{k=n+1}^{\infty} \frac{f^{(k)}(x_0)}{k! (k-n)} s^{k-n} .
\end{equation}

The first two terms in equation \ref{desired} are independent of $\epsilon$, and the third term contains a group of terms that diverge as $\epsilon\rightarrow 0$. For $n=0$ the diverging group of term in \ref{desired} are not present. Then in the limit we obtain the Cauchy principal value
\begin{eqnarray}
\mathrm{CPV}\!\!\int_a^b \frac{f(x)}{x-x_0} \mathrm{d}x\!\!\!&=&\!\!\! \sum_{k=1}^{\infty} \frac{f^{(k)}(x_0)}{k! \times k} \left[(b-x_0)^k-(a-x_0)^k\right] \nonumber \\
&&\hspace{12mm} +  f(x_0) \left[\ln(b-x_0)-\ln(x_0-a)\right]\label{CPV}
\end{eqnarray} 
For $n=1, 2, \dots$ the diverging terms are present to make the integral diverge in the limit. The finite part is obtained by dropping the diverging term and is given by
\begin{eqnarray}
	\mbox{FPI}\!\!\int_a^b\frac{f(x)}{(x-x_0)^{n+1}}\mbox{d}x\!\!\!& =&\!\!\! F_n(b-x_0)-F_n(a-x_0)\nonumber\\
	&&\hspace{12mm} +  \frac{f^{(n)}(x_0)}{n!} \left[\ln(b-x_0)-\ln(x_0-a)\right].\label{FPI}
\end{eqnarray}
The group of terms that we dropped is precisely the sum given by equation \ref{fox2}. 

\subsection{The Analytic Principal Value}
To obtain the APV, let us first consider the integral in $\Gamma^+$. We deform the path $\gamma^+$ into the path consisting of the straight path from $a$ to $x_0-\epsilon$, and the semi-circle with radius $\epsilon$ centered at the origin in the positive direction, and then the straight path from $x_0+\epsilon$ to $b$ for some $\epsilon>0$, as depicted in Figure-2. Then the integral in the upper half-plane becomes
\begin{eqnarray}
\int_{\gamma^+} \frac{f(z)}{(z-x_0)^{n+1}} \mathrm{d}z &=& \int_a^{x_0-\epsilon} \frac{f(x)}{(x-x_0)^{n+1}} \mathrm{d}x  +\int_{x_0+\epsilon}^{b} \frac{f(x)}{(x-x_0)^{n+1}} \mathrm{d}x \nonumber \\
&& \hspace{20mm} + \int_{C^+} \frac{f(z)}{(z-x_0)^{n+1}} \mathrm{d}z
\end{eqnarray}
Notice that the first and second terms are just the integrals appearing in the definition of the finite part integral.  

The integral around the semi-circle is performed with the parametrization $z=x_0+ \epsilon \mathrm{e}^{i\theta}$, with $\pi>\theta>0$. Expanding $f(z)$ along the contour of integration
\begin{equation}
f\left(x_0+\epsilon \mathrm{e}^{i\theta}\right)=\sum_{k=0}^{\infty} \frac{f^{(k)}(x_0)}{k!} \epsilon^{k} \mathrm{e}^{i k \theta} ,
\end{equation}
and substituting its expansion in the integral yield 
\begin{eqnarray}
\int_{C^+} \frac{f(z)}{(z-x_0)^{n+1}} \mbox{d}z &=& -\sum_{k=0}^{n-1} \frac{f^{(k)}(x_0)}{k!(n-k)} \left(\frac{1}{\epsilon^{n-k}} - \frac{1}{(-\epsilon)^{n-k}}\right) - i \pi \frac{f^{(n)}(x_0)}{n!}\nonumber \\
&& + \sum_{k=n+1}^{\infty} \frac{f^{(k)}(x_0)}{k!(n-k)} \left(\epsilon^{n-k}-(-\epsilon)^{n-k}\right) .
\end{eqnarray}
Adding all the terms, we find terms involving $\epsilon$ cancel out, leaving the integral independent of $\epsilon$, as it should be. We obtain
\begin{eqnarray}\label{upper}
\int_{\gamma^+} \frac{f(z)}{(z-x_0)^{n+1}}\mbox{d}x\!\!\!& =&\!\!\! F_n(b-x_0)-F_n(a-x_0) \nonumber \\
&&\hspace{1mm}+ \frac{f^{(n)}(x_0)}{n!} \left[\ln(b-x_0)-\ln(x_0-a)\right] - i \pi \frac{f^{(n)}(x_0)}{n!} .
\end{eqnarray}

To obtain the integral in $\Gamma^-$, we use the contour which is the mirror image of the contour in the evaluation of the integral in $\Gamma^+$. Following similar steps, we obtain the integral in the lower half-plane
\begin{eqnarray}\label{lower}
\int_{\gamma^-} \frac{f(z)}{(z-x_0)^{n+1}}\mbox{d}x\!\!\!& =&\!\!\! F_n(b-x_0)-F_n(a-x_0) \nonumber \\
&&\hspace{1mm}+ \frac{f^{(n)}(x_0)}{n!} \left[\ln(b-x_0)-\ln(x_0-a)\right] + i \pi \frac{f^{(n)}(x_0)}{n!} .
\end{eqnarray}
Observe that equations \ref{upper} and \ref{lower} differ only in sign in the second term. which is due to the fact that the two paths are oppositely oriented.

While equations \ref{upper} and \ref{lower} are computed using particular paths, their values, however, do not depend on the chosen path. Their right-hand sides give the values $\mathrm{Int}^+(x_0)$ and $\mathrm{Int}^-(x_0)$, respectively. Any of the equations \ref{principal}, \ref{form1} and \ref{form2} can now be used to reproduce the Cauchy principal value and the finite-part integrals obtained above.

\subsection{Example} 
Let us compute the Cauchy principal value and the finite-part integral of the following divergent integral,
\begin{equation}
\int_{-1}^{1} \frac{\cos x}{x^{n+1}} \mathrm{d}x, \;\;\; n=0,1,\dots ,
\end{equation}
using equations \ref{CPV} and \ref{FPI}, respectively. It is evident from equation \ref{CPV} that the CPV vanishes. Also the FPI vanishes for even $n$. Only when $n$ is odd that the FPI does not vanish. Using equation \ref{FPI}, the first two non-zero FPI are determined to be
\begin{eqnarray}
\mathrm{FPI}\!\!\int_{_-1}^{1} \frac{\cos x}{x^2}\mathrm{d}x \!\!\!&=&\!\!\! -2 + 2\sum_{j=1}^{\infty} \frac{(-1)^j}{(2j)! (2j-1)}\nonumber \\
\!\!\!&=&\!\!\! - 2 \left[\cos(1)+\mathrm{Si}(1)\right],\label{fpi1}
\end{eqnarray}
\begin{eqnarray}
	\mathrm{FPI}\!\!\int_{_-1}^{1} \frac{\cos x}{x^4}\mathrm{d}x \!\!\!&=&\!\!\! \frac{1}{3} +2 \sum_{j=2}^{\infty} \frac{(-1)^j}{(2j)! (2j-3)}\nonumber \\
	\!\!\!&=&\!\!\! \frac{1}{3} \left[\mathrm{Si}(1)+\sin(1)-\cos(1)\right] \label{fpi2},
\end{eqnarray}
where $\mathrm{Si}(z)$ is the sine-integral function.

Now we obtain the Analytic Principal Value of the same divergent integrals. The complex extension of $f(x)=\cos x$ is $f(z)=\cos z$, which is an entire function. Let us use the definition of the Principal Value given by equation \ref{principal}. We chose for $\gamma^+$ the semi-circle in the upper-half plane centered at the origin with a unit radius; and for $\gamma^-$ the semi-circle in the lower half plane centered at the origin as well. Parameterizing the paths as $z=a \mathrm{e}^{i\theta}$, where $\pi<\theta<0$ for $\gamma^+$ and $-\pi<\theta<0$ for $\gamma^-$, and averaging the integrals for these paths yield the APV
\begin{eqnarray}\label{pvex}
\bbint{-1}{1} \frac{\cos x}{x^{n+1}}\mathrm{d}x &=& -\int_0^{\pi}\sin(\cos\theta) \sinh(\sin\theta) \cos(n\theta) \mathrm{d}\theta \nonumber \\
&& -\int_0^{\pi}\cos(\cos\theta) \cosh(\sin\theta) \sin(n\theta) \mathrm{d}\theta 
\end{eqnarray}
The right hand side of the equation now involves an integration that is absolutely convergent which is amenable to standard methods of integration, either analytically or numerically. For even $n$ the principal value vanishes. Moreover, equations \ref{fpi1} and \ref{fpi2} are reproduced from equation \ref{pvex} by explicit evaluation of the integral (which we did using Mathematica 10.3), in accordance with the equality of the Analytic Principal Value and the Finite-Part Integral.

We may be interested in obtaining the behavior of the FPI for arbitrarily large $n$. It is not immediately clear from the definition of the FPI or from the power series how the asymptotic expansion can be obtained. This is where the integral representation comes in handy in obtaining the asymptotic expansion. The integral representation given by equation \ref{pvex} suggests that the expansion can be obtained by the standard method of integration by parts for Fourier integrals \cite{wong}. Successive integration by parts yields the expansion
\begin{eqnarray}
\bbint{-1}{1} \frac{\cos x}{x^{n+1}}\mathrm{d}x\!\!\! &\sim& \!\!\!((-1)^n-1) \left[\frac{\cos(1)}{n}-\frac{\cos(1)+\sin(1)}{n^3}+ \frac{5\sin(1)-6\cos(1)}{n^5} - \cdots \right] \nonumber \\
&& \hspace{-14mm}+ ((-1)^n-1) \left[-\frac{\sin(1)}{n^2}+\frac{3\cos(1)}{n^4} - \frac{5\cos(1)-23\sin(1)}{n^6} + \dots \right],\;\;\; n\rightarrow \infty .
\end{eqnarray}
The expansion identically vanishes for even $n$, which is consistent with the fact that the principal value vanishes for such values of $n$. This example demonstrates how the entire repertoire of asymptotic analysis for regular integrals can be used in the asymptotic analysis of hypersingular integrals using their (absolutely convergent) integral representations.

\section{Conclusion}\label{conclusion}
 We have shown that, under certain conditions, the Cauchy Principal Value and the Finite-Part Integral are values of absolutely convergent integrals. We have seen that such integral representation of them provides another way of computing their values, analytically, numerically or asymptotically, using standard methods applicable to regular integrals. Furthermore, it offers another way to look at problems at a different perspective. A convergent integral representation may allow us to cast, for example, integral equations involving singular kernels into integral equations involving regular kernels. But of course this cannot be done without complication: the domain of the unknown function will now have to be extended beyond the original domain. It is possible though that new insight can be gained from such rewriting of the original problem. Conversely our results here may allow us to rewrite integral equations involving regular kernels into hypersingular integral equations which, too, may offer new insights not available in the original formulation of the problem.  

Clearly there is an enormous potential in absolutely convergent integral representations of divergent integrals such as the family of integrals considered here. However, it is not apparent at the moment if such representation exists for every divergent integral. For our present case, such representation is possible under analyticity condition on the complex extension of the relevant function. However, such condition is a stringent one. It is only necessary for the function to satisfy a Holder condition for the CPV to exist; and for it to further posses derivatives up to order $n$ for the FPI to exist as well. Our results here then do not cover all possible cases of the CPV and the FPI; it is an open question whether an absolutely convergent integral representation exist for the rest of the cases. We leave it to future developments in the exploration of the possibility of obtaining an absolutely convergent integral representation of any given divergent integral.

\section*{Acknowledgement}
This work was funded by the UP System Enhanced Creative Work and Research Grant (ECWRG 2015-2-016).

\end{document}